\newcommand{\ket}[1]{|#1\rangle}                  
\newcommand{\bra}[1]{\langle #1 |}     
\newcommand{\inner}[2]{\langle #1|#2\rangle}
\newcommand{\proj}[1]{|#1\rangle \langle #1|}
\newcommand{\cs}[2]{|\langle #1|#2\rangle|^2}  	  %
\def\Tr{{\rm Tr}}
\newtheorem{theorem}{Theorem}
\newtheorem{lemma}{Lemma}
\newtheorem{corollary}{Corollary}
\begin{document}
\title{Entropic uncertainty relations for multiple measurements}

\author{Shang Liu}
\affiliation{School of Physics, Peking University, Beijing 100871, China}
\author{Liang-Zhu Mu}
\affiliation{School of Physics, Peking University, Beijing 100871, China}
\author{Heng Fan}
\email{hfan@iphy.ac.cn}
\affiliation{Institute of Physics, Chinese Academy of Science, Beijing 100190, China}
\affiliation{Collaborative Innovation Center of Quantum Matter, Beijing 100190, China}

\begin{abstract}
We present the entropic uncertainty relations for
multiple measurement settings in quantum mechanics.
Those uncertainty relations are obtained for both cases
with and without the presence of quantum memory. They
take concise forms which can be proven in a unified method and easy to calculate.
Our results recover the well known entropic uncertainty relations
for two observables, which show the uncertainties about the outcomes of
two incompatible measurements.
Those uncertainty relations are applicable in both
foundations of quantum theory and the security of many
quantum cryptographic protocols.

\end{abstract}
\pacs{03.65.Ta, 03.65.Ud, 03.67.Dd, 03.65.Aa}
\maketitle

\emph{Introduction.}---Uncertainty principle is one unique feature of quantum mechanics differing
from the classical case.  Heisenberg \cite{Heisenberg} formulated the first uncertainty relation
which shows that one cannot predict the outcomes with arbitrary precision for two
incompatible measurements simultaneously, such as position and momentum, on a particle.
As a fundamental property, uncertainty principle is continuously attracting lots of attention and research interests.
Variants of uncertainty relations are presented in the past years. One type of best known uncertainty relations
today is in the form proposed by Robertson \cite{Robertson}.
For arbitrary two observables $U$ and $V$, the uncertainty relation given by Robertson takes the form,
$\sigma_U\sigma_V\geq \left|\bra{\psi}\frac{1}{2\text{i}}[U,V]\ket{\psi}\right|$,
where $\sigma$ is the standard deviation of an observable.
This bound of uncertainty, however, may have the drawback of being state-dependent.
So for some states, this bound is trivial.
Deutsch \cite{Deutsch} proposed to use Shannon entropy
 ($H(\{p_i\}):=-\sum_ip_i\log p_i$, the base of logarithm is assumed being 2 hereafter) as a proper measure of uncertainty
and presented the entropic uncertainty relation:
\begin{equation}
H(U)+H(V)\geq -2\log\left(\frac{1+\sqrt{c(U,V)}}{2}\right)
\end{equation}
Here, $U$ and $V$ are two projective measurements with bases $\{\ket{u_i}\}$ and $\{\ket{v_j}\}$ respectively.
In this form, the uncertainty is naturally quantified by entropy in the
information-theoretical context instead of the standard deviation.
Also, we use the notations $c(u_i,v_j):=\cs{u_i}{v_j}$, $c(U,V):=\max_{i,j}c(u_i,v_j)=\max_{i,j}|\inner{u_i}{v_j}|^2$, which are consistent with past works.
We may find that the bound of uncertainty depends only on the complementarity of the observables avioding the shortcomings
of state-dependent.
Maassen and Uffink \cite{Maassen&Uffink} (MU) further strengthened Deutsch's inequality to a tighter and more succinct form as
\begin{equation}
H(U)+H(V)\geq -\log c(U,V).
\label{MU}
\end{equation}
In this inequality, the largest uncertainty can be obtained for observables which are mutually unbiased, i.e.,
the quantities $c(u_i,v_j):=\cs{u_i}{v_j}$ take the same value, $1/\sqrt {d}$, which depends on the dimension $d$.
It is known that the mutually unbiased bases (MUBs) are useful in quantum information processing, in particular,
for quantum key distributions, see for example Refs. \cite{Cerf-d,XiongPRA,FanPRL,PhysRep,Vwani}.
By considering that the number of MUBs can be at most $d+1$ \cite{Vwani}, other than
only restricting to 2,
it is natural to investigate the uncertainty relation with more than two measurements even for
the simplest two-dimensional case, see FIG. 1.

Many efforts have been made to generalize the uncertainty relations to more than two observables,
see \cite{Wehner&Winter} for a review. Significant progresses have been
made in this direction for case of MUBs \cite{Ivanovic,Sanchez,Sanchez-Ruiz}, and a few latest results in other cases \cite{Friedland,Karol0,Karol,Marco}. We will present the entropic uncertainty relations for multiple
measurements with general condition.

On the other hand, a remarkable result of the uncertainty principle recently is to investigate the
effect of quantum memory which is available with current technologies.
It is shown that the extent of uncertainty can be reduced with the help of memory which
might be entangled with the measured system \cite{Berta}.
This uncertainty relation is confirmed experimentally \cite{experiment1,experiment2} and can be applied
in studying the security of quantum cryptography. Again, the inequality is only for two measurements while
the case of multiple observables is of fundamental interest and of practical applications for quantum
key distributions with more than two measurements settings \cite{XiongPRA,PhysRep}.
We remark that the uncertainty inequality has been extended to multi-partite systems \cite{Coles}
and can be related with many concepts such as teleportation, entanglement witness in quantum information processing \cite{HuPRA,experiment2}. Still, the general uncertainty inequalities for multiple measurements
in the presence of quantum memory are still absent.
In this Letter, we will present the uncertainty inequalities for multiple measurements which
are in a unified framework for both cases with or without the quantum memory.

\begin{figure}
\includegraphics[width=0.4\linewidth]{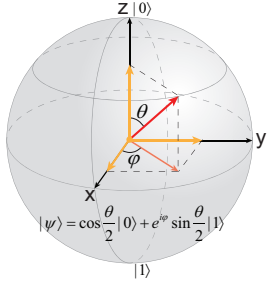}
\caption{(color online) A typical set of three MUBs in two-dimensional Hilbert space is visualized on the Bloch sphere. These measurements can provide a complete description of any quantum state in this space. }
\label{Bloch}
\end{figure}


\emph{R\'{e}nyi entropy and the generalization of Deutsch's inequality. }---The generalization of Deutsch's inequality is relatively simple, but we need the concept of R\'{e}nyi entropy \cite{Renyi} to present our result. For a set of probabilities $\{p_i\}$ and any real number $\alpha>0$, the classical R\'{e}nyi entropy is defined as
\begin{equation}
H_{\alpha}(\{p_i\}):=\frac{1}{1-\alpha}\log(\sum_ip_i^{\alpha}).
\end{equation}
R\'{e}nyi entropy is a monotonic decreasing function with respect to $\alpha$ when the probability distribution is fixed. Taking the limit as $\alpha\rightarrow 1$, one reaches the definition of Shannon entropy: $\lim_{\alpha\rightarrow 1} H_{\alpha}(p)\equiv H_1(p)=-\sum_ip_i\log p_i$, where we have used $p$ as an abbreviation of $\{p_i\}$. On the other hand, we also have $\lim_{\alpha\rightarrow\infty}H_{\alpha}(p)=-\log(\max_ip_i)$. Obviously, having similar properties with Shannon entropy, R\'{e}nyi entropies are also appropriate tools for the description of uncertainty.

Now, suppose that we have $N$ projective measurements $M_1$, $M_2$, $\ldots$, $M_N$ whose bases are $\{\ket{u^1_{i_1}}\}$, $\{\ket{u^2_{i_2}}\}$, $\ldots$, $\{\ket{u^N_{i_N}}\}$, respectively. We have the following theorem.
\begin{theorem}\label{generalization_D}
The following entropic uncertainty relation holds.
\begin{equation}
\sum_{m=1}^N H_{\infty}(M_m)\geq -\log(h),
\end{equation}
where
\begin{equation}
h=\max_{i_1,i_2,...i_N}\prod_{m=1}^{N}(\frac{1+\sqrt{c(u_{i_m}^m,u_{i_{m+1}}^{m+1})}}{2}).
\label{bound_D}
\end{equation}
\end{theorem}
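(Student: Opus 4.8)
The plan is to turn the statement into a bound on a product of maximal outcome probabilities and then control consecutive measurements by a single two-vector estimate. For any fixed state $\rho$ write $p^m_{i_m}$ for the probability of outcome $i_m$ in $M_m$; since $H_\infty(M_m)=-\log\big(\max_{i_m}p^m_{i_m}\big)$, one has
\begin{equation}
\sum_{m=1}^N H_\infty(M_m)=-\log\prod_{m=1}^N\Big(\max_{i_m}p^m_{i_m}\Big),
\end{equation}
so it is enough to prove $\prod_{m=1}^N\big(\max_{i_m}p^m_{i_m}\big)\le h$.

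The key ingredient is the elementary bound behind Deutsch's inequality: for unit vectors $\ket{a},\ket{b}$ and any state $\rho$,
\begin{equation}
\sqrt{\bra{a}\rho\ket{a}\,\bra{b}\rho\ket{b}}\ \le\ \frac{1+|\inner{a}{b}|}{2}=\frac{1+\sqrt{c(a,b)}}{2}.
\end{equation}
I would prove this by first noting that the left-hand side is the geometric mean of two functions linear in $\rho$, hence concave in $\rho$ (write $\sqrt{xy}=\min_{t>0}\tfrac12(tx+t^{-1}y)$), so its maximum over states is attained at a pure state $\rho=\proj{\psi}$. For $\ket{\psi}$ one may replace it by its normalized projection onto $\mathrm{span}\{\ket{a},\ket{b}\}$, which only increases $|\inner{a}{\psi}|\,|\inner{b}{\psi}|$; then, choosing phases so that $\inner{a}{b}=\sqrt{c(a,b)}\ge 0$ and parametrizing $\ket{\psi}$ by a single angle in that plane, a short trigonometric maximization gives exactly $\tfrac12(1+\sqrt{c(a,b)})$, attained when $\ket{\psi}$ bisects $\ket{a}$ and $\ket{b}$.

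It remains to chain this estimate cyclically. For each $m$ let $i^{*}_m$ achieve $\max_{i_m}p^m_{i_m}$, and apply the bound to $\ket{a}=\ket{u^m_{i^{*}_m}}$ and $\ket{b}=\ket{u^{m+1}_{i^{*}_{m+1}}}$ (indices modulo $N$). Multiplying the resulting $N$ inequalities, the left-hand side collapses to $\prod_{m=1}^N p^m_{i^{*}_m}$, since each $p^m_{i^{*}_m}$ appears under a square root in precisely the two factors coming from the pairs $(m-1,m)$ and $(m,m+1)$; the right-hand side is $\prod_{m=1}^N\tfrac12\big(1+\sqrt{c(u^m_{i^{*}_m},u^{m+1}_{i^{*}_{m+1}})}\big)$, which is one admissible term in the maximum defining $h$ and hence $\le h$. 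Applying $-\log$ gives the claim; for $N=2$ the maximum of a square equals the square of the maximum, so one recovers Deutsch's inequality (indeed in the stronger min-entropy form, since $H_1\ge H_\infty$).

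The main obstacle is the two-vector lemma: obtaining the sharp constant $\tfrac12(1+\sqrt{c})$ needs the reduction to the two-dimensional span followed by the trigonometric optimization, and upgrading from pure to mixed states relies on the concavity remark. The cyclic bookkeeping in the final step is then routine, once one observes that every maximal probability is shared by exactly two neighbouring factors of the telescoping product.
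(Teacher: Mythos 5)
Your proposal is correct and follows essentially the same route as the paper's own proof: reduce the min-entropy sum to a bound on $\prod_m \max_{i_m} p^m_{i_m}$, factor that product cyclically into $N$ terms of the form $\sqrt{p^m_{i_m}p^{m+1}_{i_{m+1}}}$, and bound each by the Deutsch-type two-vector estimate $\tfrac12\bigl(1+\sqrt{c}\bigr)$ obtained from projection onto the span and a trigonometric maximization. The only addition is your explicit concavity argument reducing mixed states to pure ones, which the paper leaves implicit by working with a pure $\ket{\Phi}$ throughout; this is a small but welcome tightening of the same argument rather than a different approach.
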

Here, module $N$ is assumed for superscripts. We note that because of the monotonicity of R\'{e}nyi entropy, we can actually replace the l.h.s. by $H_{\alpha_1}(M_1)+H_{\alpha_2}(M_2)+\ldots+H_{\alpha_N}(M_N)$ for an arbitrary set of $\{\alpha_i\}$. The statement in Theorem \ref{generalization_D} is the tightest version. Especially, if we choose all $\alpha_i$'s to be 1, the natural generalization of Deutsch's work is obtained. This is not a simple summation of two-observable inequalities since the maximum is taken outside the multiplication.

Proof of this result is not difficult, but still requires many lines of argument. Roughly speaking, we aim at giving an upper bound for the quantity $p^1_{i_1}p^2_{i_2}...p^N_{i_N}$, where $p^m_{i_m}$ is the probability of getting the $i_m$th result of the $m$th measurement. This quantity can be factorized into terms like $\sqrt{p^m_ip^n_j}=|\inner{\psi}{u^m_i}\inner{\psi}{u^n_j}|$. If we imagine that all the vectors are in real Euclidean space, we simply have,
\begin{eqnarray}
|\inner{\psi}{u^m_i}\inner{\psi}{u^n_j}|\leq \frac{1}{2}(1+|\inner{u^m_i}{u^n_j}|),
\end{eqnarray}
which obviously implies the inequality in this theorem.
However the vectors actually live in a complex Hilbert space, but similar procedure is still able to be applied.
This concludes our proof. More details are given in the supplementary information \cite{sup}.

\emph{Generalization of MU inequality. }---
We now consider the MU bound for multi-observable uncertainty. The state of the measured system is denoted by $\rho$, which is generally a mixed state with its von Neumann entropy defined as, $S(\rho):=-\Tr(\rho\log\rho)$.
\begin{theorem}\label{generalization_MU}
The following entropic uncertainty relation holds,
\begin{equation}
\sum_{m=1}^NH(M_m)\geq-\log(b)+(N-1)S(\rho),
\label{bound_MU}
\end{equation}
where
\begin{equation}
b=\max_{i_N}\{\sum_{i_2\sim i_{N-1}}\max_{i_1}[c(u^1_{i_1},u^2_{i_2})]\Pi_{m=2}^{N-1}c(u^m_{i_m},u^{m+1}_{i_{m+1}})\}.
\end{equation}
\end{theorem}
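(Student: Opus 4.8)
The plan is to turn Theorem~\ref{generalization_MU} into a statement about classical probability distributions by means of the monotonicity of the quantum relative entropy (data processing), and then to establish that classical statement by induction along the chain $M_1\to M_2\to\cdots\to M_N$.

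\emph{Reduction to a classical inequality.} For each $m$ let $\mathcal{M}_m$ denote the pinching (completely dephasing) channel in the eigenbasis of $M_m$, so that $H(M_m)=S(\mathcal{M}_m(\rho))$ and, for any state $\rho$, $D(\rho\,\|\,\mathcal{M}_m(\rho))=H(M_m)-S(\rho)$. Write $p^m_{i_m}=\langle u^m_{i_m}|\rho|u^m_{i_m}\rangle$ for the outcome distribution of $M_m$, and let $C^{(m)}$ be the matrix with entries $C^{(m)}_{i_m i_{m+1}}=c(u^m_{i_m},u^{m+1}_{i_{m+1}})$, which is doubly stochastic since $\{|u^m_{i_m}\rangle\}$ and $\{|u^{m+1}_{i_{m+1}}\rangle\}$ are orthonormal bases. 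Applying data processing to the pair $(\rho,\mathcal{M}_{m+1}(\rho))$ under the channel $\mathcal{M}_m$ gives the classical bound
\begin{equation}
H(M_{m+1})-S(\rho)\ \ge\ D\!\left(p^{m}\,\big\|\,C^{(m)}p^{m+1}\right),\qquad m=1,\dots,N-1 .
\end{equation}
Summing these and adding $H(M_1)=-\sum_{i_1}p^1_{i_1}\log p^1_{i_1}$ yields
\begin{equation}
\sum_{m=1}^{N}H(M_m)-(N-1)S(\rho)\ \ge\ H(M_1)+\sum_{m=1}^{N-1}D\!\left(p^{m}\,\big\|\,C^{(m)}p^{m+1}\right),
\end{equation}
and treating the first link symmetrically, using $(C^{(1)\,T}p^{1})_{i_2}\le w_{i_2}:=\max_{i_1}c(u^1_{i_1},u^2_{i_2})$, reduces the problem to the purely classical claim
\begin{equation}
-\sum_{i_2}p^2_{i_2}\log w_{i_2}+\sum_{m=2}^{N-1}D\!\left(p^{m}\,\big\|\,C^{(m)}p^{m+1}\right)\ \ge\ -\log\!\left[\max_{i_N}\bigl(w^{T}C^{(2)}C^{(3)}\cdots C^{(N-1)}\bigr)_{i_N}\right],
\end{equation}
whose right-hand side is exactly $-\log b$. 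For $N=2$ the same manipulation reproduces $H(M_1)+H(M_2)\ge-\log c(M_1,M_2)+S(\rho)$, so Theorem~\ref{generalization_MU} contains the Maassen--Uffink bound.

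\emph{The inductive step.} I would prove the classical claim by induction on the number of links, with the log-sum inequality as the basic tool. Absorbing the leading ``source'' term into the first relative entropy,
\begin{equation}
-\sum_{i_2}p^2_{i_2}\log w_{i_2}+D\!\left(p^{2}\,\big\|\,C^{(2)}p^{3}\right)=\sum_{i_2}p^2_{i_2}\log\frac{p^2_{i_2}}{w_{i_2}\,(C^{(2)}p^{3})_{i_2}}\ \ge\ -\log\!\sum_{i_2}w_{i_2}\,(C^{(2)}p^{3})_{i_2}=-\log\!\sum_{i_3}w'_{i_3}\,p^3_{i_3},
\end{equation}
where $w':=C^{(2)\,T}w$ again satisfies $0\le w'_{i_3}\le1$ because $C^{(2)}$ is column stochastic; one feeds $w'$ and the remaining links back into the same estimate, and at the last link closes the argument with $\sum_{i_{N-1}}w^{(N-1)}_{i_{N-1}}(C^{(N-1)}p^{N})_{i_{N-1}}\le\max_{i_N}(C^{(N-1)\,T}w^{(N-1)})_{i_N}$, which is precisely $b$. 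The double stochasticity of the $C^{(m)}$, i.e.\ the identities $\sum_{i_{m+1}}c(u^m_{i_m},u^{m+1}_{i_{m+1}})=\sum_{i_m}c(u^m_{i_m},u^{m+1}_{i_{m+1}})=1$, is used at every step, both to keep the propagated weight bounded by $1$ and to collapse the interior sums.

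\emph{Main obstacle.} The delicate point is exactly this chaining. The step above loses information --- it replaces the pointwise source $-\sum_{i_3}p^3_{i_3}\log w'_{i_3}$ by the weaker $-\log\sum_{i_3}w'_{i_3}p^3_{i_3}$ --- and the tempting shortcut $D(p\|q)\ge\log(\langle w,p\rangle/\langle w,q\rangle)$ is false, so one cannot simply propagate a single weight vector down the chain by a one-line estimate, nor reconstruct $b$ from the pairwise complementarities $c(M_m,M_{m+1})$ alone (such a bound is too weak). Carrying the full distributions $p^{m}$ through the induction and arranging the log-sum bookkeeping so that precisely the nested quantity $b$ --- the maximum over $i_1$ absorbed into $w$, the interior indices summed, the maximum over $i_N$ at the far end --- emerges is where the real work lies, and I expect that to be the crux of the argument.
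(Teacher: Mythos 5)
Your reduction to a classical statement is sound as far as it goes: the identity $D(\rho\,\|\,\mathcal{M}_{m+1}(\rho))=H(M_{m+1})-S(\rho)$, the data-processing step under $\mathcal{M}_m$, the symmetric treatment of the first link via $(C^{(1)T}p^1)_{i_2}\le w_{i_2}$, and the identification of $\max_{i_N}(w^TC^{(2)}\cdots C^{(N-1)})_{i_N}$ with $b$ are all correct. The problem is that the remaining classical claim is the entire content of the theorem, and your proposed induction does not prove it. After one application of the log-sum inequality the source term has the weaker form $-\log\langle w',p^3\rangle$ rather than $-\sum_{i_3}p^3_{i_3}\log w'_{i_3}$, and the inequality you would need in order to ``feed $w'$ back into the same estimate,'' namely $-\log\langle w',p^3\rangle+D(p^3\|C^{(3)}p^4)\ge-\log\langle C^{(3)T}w',p^4\rangle$, is precisely the shortcut $D(p\|q)\ge\log(\langle w,p\rangle/\langle w,q\rangle)$ that you yourself correctly identify as false. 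So the chain does not close for $N\ge4$, and your ``main obstacle'' paragraph in effect concedes that the crux is missing: as written, the argument establishes the theorem only for $N=2,3$.

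The way out --- and it is what the paper does --- is to never split the left-hand side into a sum of independent pairwise relative entropies $D(p^m\|C^{(m)}p^{m+1})$ in the first place. Keep a single relative entropy $S(\rho\|\sigma)$, pinch only the second argument at each step, and use the commutation identity $\Tr[\mathcal{M}(\rho)\log\mathcal{M}(\sigma)]=\Tr[\rho\log\mathcal{M}(\sigma)]$ to rewrite $S(\mathcal{M}_{m+1}(\rho)\|\mathcal{M}_{m+1}(\sigma))$ as $-H(M_{m+1})+S(\rho)+S(\rho\|\mathcal{M}_{m+1}(\sigma))$: the first argument is restored to $\rho$ after every step (paying one $H(M_{m+1})-S(\rho)$ each time), while the second argument accumulates the full product, ending with eigenvalues $\beta^N_j=\sum_{i_1,\ldots,i_{N-1}}p^1_{i_1}c(u^1_{i_1},u^2_{i_2})\cdots c(u^{N-1}_{i_{N-1}},u^N_j)=(C^{(N-1)T}\cdots C^{(1)T}p^1)_j$, after which the pointwise bound $\beta^N_j\le(C^{(N-1)T}\cdots C^{(2)T}w)_j\le b$ finishes the proof in one line. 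In your classical language, the step that actually telescopes is $H(M_{m+1})-S(\rho)\ge\sum_{i_m}p^m_{i_m}\log s_{i_m}-\sum_{i_{m+1}}p^{m+1}_{i_{m+1}}\log(C^{(m)T}s)_{i_{m+1}}$, valid for an \emph{arbitrary} positive weight vector $s$ carried down the chain (take $s^{(2)}=w$ and $s^{(m+1)}=C^{(m)T}s^{(m)}$); your decomposition corresponds to freezing $s=C^{(m)}p^{m+1}$ at every link, which is exactly what destroys the telescoping.
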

For example, if $N=3$, we have:
\begin{equation}
b=\max_k\{\sum_{j}\max_i[c(u^1_i,u^2_j)]c(u^2_j,u^3_k)\}.
\end{equation}
The outline of the proof is sketched below, where the method used is inspired by the excellent works of Coles \emph{et al.} \cite{Coles}.

First, one can easily verify that for a projective measurement, say $U$, we have the following relation,
\begin{equation}
H(U)-S(\rho)=S(\rho||\sum_i\proj{u_i}\rho\proj{u_i}),
\end{equation}
where $S(\rho||\sigma):=\Tr(\rho\log\rho)-\Tr(\rho\log\sigma)$ is the quantum relative entropy. Then, by using the well known theorem that quantum channels never increase relative entropy (see Page 208 of Ref. \cite{Vedral} and Ref. \cite{Nielsen&Chuang}), i.e. $S(\rho||\sigma)\geq S(\mathcal{E}(\rho)||\mathcal{E}(\sigma))$ for any trace-preserving operation $\mathcal{E}$, we obtain the following inequality from the equation above,
\begin{equation}
H(U)+H(V)\geq S(\rho||\sum_{i,j}p_ic(u_i,v_j)\ket{v_j}\bra{v_j})+2S(\rho),
\end{equation}
where the operation $\mathcal{E}$ utilized is $\mathcal{E}(\rho):=\sum_{j}\proj{v_j}\rho\proj{v_j}$, and $p_i=\bra{u_i}\rho\ket{u_i}$ is the probability of obtaining the $i$-th outcome of $U$. Notice that the r.h.s. of this inequality again contains a term of relative entropy, thus we can apply the same method continuously and obtain inequalities with arbitrarily many entropic terms in the l.h.s. More precisely, we find,
\begin{equation}
-NS(\rho)+\sum_{m=1}^NH(M_m)\geq S(\rho||\sum_j\beta^N_j\ket{u^N_j}\bra{u^N_j}),
\end{equation}
where $\beta^N_j:=\sum_{i_1,i_2,...i_{N-1}} p^1_{i_1}c(u^1_{i_1},u^2_{i_2})...c(u^{N-1}_{i_{N-1}},u^N_j)$ with $p^1_{i_1}=\bra{u^1_{i_1}}\rho\ket{u^1_{i_1}}$. Finally, by slightly weakening this inequality, we obtain exactly the result shown in Theorem \ref{generalization_MU}.

Let us take a further look of this result. Notice that since the maximum over $j$ is taken outside the summation, the quantity $b$ in this inequality
is always less than or equal to 1 resulting a non-negative $-\log(b)$ and therefore non-trivial. The additional term of von Neumann entropy is physically meaningful (though making the bound state-dependent) since a mixed state is expected to increase the uncertainty. By taking $N=2$, one simply recovers \eqref{MU}
with a tighter bound with an additional term $S(\rho )$. So we regard inequality (\ref{bound_MU}) as a generalization of MU inequality.
We remark that the MU inequality with term $S(\rho )$ can be obtained from the memory assisted entropic uncertainty
inequality \cite{Berta}, and we still call it the MU inequality in this Letter.

In addition, in the proof of Theorem \ref{generalization_MU}, we can obtain a corollary as a weighted uncertainty relation.
\begin{corollary}\label{weighted}
Suppose that we have three projective measurements $U$, $V$ and $W$ with bases $\{\ket{u_i}\}$, $\{\ket{v_j}\}$ and $\{\ket{w_k}\}$, we have
\begin{equation}
\begin{split}
&H(U)+H(V)+2H(W)\geq\\
&2S(\rho)-\log\{\max_{i,j,k}[c(u_i,w_k)c(w_k,v_j)]\}
\end{split}
\end{equation}
\end{corollary}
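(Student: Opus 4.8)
\emph{Proof proposal.}---The plan is to re-use the one-step argument from the proof of Theorem~\ref{generalization_MU}, but with the measurement $W$ playing the role of a \emph{common} pinching channel applied to the two relative-entropy expressions coming from $U$ and from $V$. First I would recall the identity used above, $H(U)-S(\rho)=S(\rho\|\sum_i\proj{u_i}\rho\proj{u_i})$, together with its twin $H(V)-S(\rho)=S(\rho\|\sum_j\proj{v_j}\rho\proj{v_j})$. Then, applying the channel $\mathcal{E}_W(X):=\sum_k\proj{w_k}X\proj{w_k}$ and invoking the fact that a trace-preserving operation never increases relative entropy, I obtain $H(U)-S(\rho)\ge S(\mathcal{E}_W(\rho)\|\mathcal{E}_W(\sum_i\proj{u_i}\rho\proj{u_i}))$ and the analogous bound with $V$ in place of $U$.

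The key simplification is the second step. Both $\mathcal{E}_W(\rho)$ and $\mathcal{E}_W(\sum_i\proj{u_i}\rho\proj{u_i})$ are diagonal in the $W$ basis, with eigenvalues $q_k:=\bra{w_k}\rho\ket{w_k}$ and $\sum_i c(u_i,w_k)p^u_i$ respectively, where $p^u_i:=\bra{u_i}\rho\ket{u_i}$; hence the quantum relative entropy collapses to the classical Kullback--Leibler divergence of these two distributions. Since $-\Tr[\mathcal{E}_W(\rho)\log\mathcal{E}_W(\rho)]=H(W)$, this yields $H(U)+H(W)-S(\rho)\ge-\sum_k q_k\log(\sum_i c(u_i,w_k)p^u_i)$ and, symmetrically, $H(V)+H(W)-S(\rho)\ge-\sum_k q_k\log(\sum_j c(w_k,v_j)p^v_j)$.

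Third, I would add the two inequalities and bound the right-hand side from below. Because $\sum_i p^u_i=1$, the convex-combination estimate gives $\sum_i c(u_i,w_k)p^u_i\le\max_i c(u_i,w_k)$, and likewise $\sum_j c(w_k,v_j)p^v_j\le\max_j c(w_k,v_j)$; therefore the $k$-th summand is at most $\log[\max_i c(u_i,w_k)\max_j c(w_k,v_j)]\le\log\max_{i,j,k}[c(u_i,w_k)c(w_k,v_j)]$. Pulling this $k$-independent quantity out of the sum via $\sum_k q_k=1$ produces exactly $H(U)+H(V)+2H(W)\ge 2S(\rho)-\log\{\max_{i,j,k}[c(u_i,w_k)c(w_k,v_j)]\}$, which is the claimed corollary.

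I do not expect a genuine obstacle, since the statement is a repackaging of the Theorem~\ref{generalization_MU} machinery with a specific choice of the intermediate measurement; the only points requiring care are in the second step, where one must verify that pinching by $W$ renders \emph{both} arguments of the relative entropy simultaneously diagonal so that the quantum relative entropy truly reduces to a classical one, and in the last step, where each $\max$ must be kept \emph{outside} its sum---this is exactly what guarantees $\max_{i,j,k}[c(u_i,w_k)c(w_k,v_j)]\le 1$ and hence the non-triviality of the bound.
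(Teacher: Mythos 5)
Your proposal is correct and follows essentially the same route as the paper: both proofs pinch by the common measurement $W$, invoke monotonicity of relative entropy, add the two symmetric two-measurement inequalities (each carrying one copy of $H(W)$), and then bound the resulting convex combinations by $\max_{i,j,k}[c(u_i,w_k)c(w_k,v_j)]$. The only cosmetic difference is that you evaluate the pinched relative entropy as a classical Kullback--Leibler divergence in the $W$ basis, whereas the paper keeps it in the form $-\Tr(\rho\log\sigma)$ with $\sigma$ diagonal in that basis; these are identical quantities.
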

This is also a generalization of MU inequality.
But this inequality seems difficult to be extended to more observables.

\emph{Performance of the inequalities. }---Next, we shall show that
our result provides a non-trivial new bound for the uncertainty. Explicitly, we shall compare our new bound with known ones and show that ours is not overwhelmed.

To start, let us specify what other bounds will be considered. Note that one could always construct a multi-measurement inequality from two-measurement ones by summation. For instance, by simply combining three MU inequalities for mixed state,
\begin{equation}
H(U)+H(V)\geq -\log c(U,V)+S(\rho),
\end{equation}
we have the following inequality,
\begin{eqnarray}
&&H(M_1)+H(M_2)+H(M_3)-\frac{3}{2}S(\rho)\nonumber \\
&&\geq -\frac{1}{2}\log[c(M_1,M_2)c(M_2,M_3)c(M_3,M_1)].
\label{scb}
\end{eqnarray}
We will call the bounds constructed in this manner summation bounds hereafter.

Also, note that any two-measurement bound itself is a valid bound for multi-measurement cases. More precisely, if we have a bound $b(i,j)$ such that $H(M_i)+H(M_j)\geq b(i,j)$, we should also have $\sum_{m=1}^NH(M_m)\geq b(i,j)$.
This is straightforward, but we should note that two-measurement bounds are not necessarily lower than the summation bound mentioned above.
Therefore they are also needed to be taken into consideration.

For convenience, we call all summation bounds and two-measurement bounds the \emph{simply constructed bounds} (SCB), where we only consider the contribution of MU bounds from now on. We will later compare our result with the maximum among all SCBs.

Before running into numerical computation, we could first prove analytically that our bound is always no less than two-measurement MU bound. To see this, assume that we are to compare our result with the two-measurement bound for $M_1$ and $M_2$, which could always be achieved by a relabeling of the measurements. Then we have
\begin{eqnarray}
b&=&\max_{i_N}\{\sum_{i_2\sim i_{N-1}}\max_{i_1}[c(u^1_{i_1},u^2_{i_2})]\Pi_{m=2}^{N-1}c(u^m_{i_m},u^{m+1}_{i_{m+1}})\}
\nonumber \\
&\leq &\max_{i_N}\{\sum_{i_2\sim i_{N-1}}\max_{i_1,i_2}[c(u^1_{i_1},u^2_{i_2})]
\Pi_{m=2}^{N-1}c(u^m_{i_m},u^{m+1}_{i_{m+1}})\}
\nonumber \\
&=&\max_{i_1,i_2}c(u^1_{i_1},u^2_{i_2})=c(M_1,M_2).
\end{eqnarray}
Consequently, our bound is no less than two-measurement bound,
$-\log(b)\geq -\log c(M_1,M_2)$.
We remark that, in two-dimensional case, since the quantity $\max_{i}c(u^1_{i},u^2_{j})$
becomes exactly the same as $\max_{i,j}c(u^1_{i},u^2_{j})$, our bound actually reduces to two-measurement bound which is not very interesting.
However, this condition does not hold for higher dimensions.

Let us now consider an example for three measurements in three-dimensional space.
The measurements are chosen explicitly as follows,
\begin{equation}
\left\{
\begin{split}
&\{(1,0,0),(0,1,0),(0,0,1)\},\\
&\{(1/\sqrt{2},0,-1/\sqrt{2}),(0,1,0),(1/\sqrt{2},0,\sqrt{2})\},\\
&\{(\sqrt{a},e^{{\rm i}\phi}\sqrt{1-a},0),(\sqrt{1-a},-e^{{\rm i}\phi}\sqrt{a},0),(0,0,1)\}.
\end{split}
\right.
\end{equation}
The value of several bounds for $\phi=\pi/2$ with respect to $a$ is shown in Fig. \ref{comMU}.
Those bounds include the maximal SCB,
our bound of Theorem \ref{generalization_MU} and the RPZ direct sum majorization bound due to Rudnicki \emph{et al.} \cite{Karol}.
In this case, our bound is always better than the SCB and also becomes a complementary to the RPZ bound.
We thus confirm that our result is non-trivial.

\begin{figure}
\centering
\includegraphics[width=0.9\linewidth]{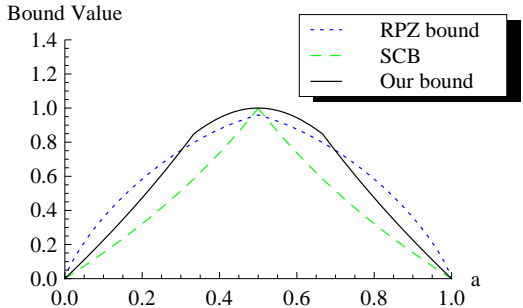}
\caption{(color online) Comparison of several bounds for $\phi=\pi/2$ with respect to $a$, including the maximal SCB in dashed-green line, our bound of Theorem \ref{generalization_MU} in solid-black line and the RPZ bound in dotted-blue line.
Here, $x$-axis indicates the value of $a$ and $y$-axis indicates the value of the bounds. }
\label{comMU}
\end{figure}

\emph{Entropic uncertainty relations in the presence of quantum memory. }---Recently, Berta \emph{et al.} \cite{Berta} introduced an entropic uncertainty relation in the presence of quantum memory as follows,
\begin{equation}
H(U|B)+H(V|B)\geq -\log c(U,V)+S(A|B)
\label{Bertabound}
\end{equation}
Here, $A$ and $B$ represent two particles in a two-body system $\rho_{AB}$,
which is generally mixed and might be entangled, $U$ and $V$ are two projective measurements applied on $A$, where $B$ is regarded as a quantum memory of system $A$. By definition, $H(U|B)$ is the conditional von Neumann entropy of the post-measurement state $\sum_i( \ket{u_i}\bra{u_i}\otimes I)\rho_{AB}(\ket{u_i}\bra{u_i}\otimes I)$, and $H(V|B)$ is similarly defined.
We know that the conditional entropy takes the form, $S(A|B)=S(\rho_{AB})-S(\rho_B)$. When $\rho_{AB}$ is pure,
we can find, $H(U|B)=H(U)-S(\rho_B)$, $H(V|B)=H(V)-S(\rho_B)$.

In quantum theory, the conditional entropy $S(A|B)$ can become negative,
implying that $A$ and $B$ are entangled \cite{adami-cerf}.
So this inequality shows that the existence of memory $B$ can help reduce uncertainty.
The conditional entropy represents also partial quantum information related with quantum
state merging \cite{winternature}.
An easier and more heuristic proof of this memory assisted uncertainty inequality
and its generalization to tripartite system,
with one being measured and two particles being used as memories, have been studied by Coles \emph{et al.} \cite{Coles}.

We could now, using our method, provide a generalization from a different view point
of the memory assisted uncertainty inequality to multi-measurement cases.
Our result is presented as follows,
\begin{theorem}\label{generalization_memory}
For a bipartite state $\rho_{AB}$ and $N$ projective measurements $\{M_i\}$ applied on $A$, we have
\begin{equation}
\sum_{m=1}^NH(M_m|B)\geq-\log(b)+(N-1)S(A|B),
\label{NBertabound}
\end{equation}
where $b$ is the same as that in Theorem \ref{generalization_MU}.
\end{theorem}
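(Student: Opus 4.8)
\emph{Proof strategy.}---The plan is to rerun the proof of Theorem~\ref{generalization_MU} almost verbatim, with each measurement channel $\mathcal{M}_m$ on $A$ replaced by $\mathcal{M}_m\otimes I_B$ and the entropy $S(\rho)$ replaced by the conditional entropy $S(A|B)$. The starting point is the bipartite analogue of the relative-entropy identity used there,
\begin{equation}
H(M_1|B)-S(A|B)=S\big(\rho_{AB}||(\mathcal{M}_1\otimes I_B)(\rho_{AB})\big),
\end{equation}
which follows from $S(A|B)=S(\rho_{AB})-S(\rho_B)$ together with two elementary facts: a measurement on $A$ leaves $\rho_B$ unchanged, and the post-measurement global state, being block-diagonal in the $A$-basis $\{\ket{u^1_{i_1}}\}$, has von Neumann entropy $H(M_1|B)+S(\rho_B)$.

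Next I would iterate. Using monotonicity of the quantum relative entropy under the trace-preserving map $\mathcal{M}_{m+1}\otimes I_B$, followed by the identity $S\big((\mathcal{M}_{m+1}\otimes I_B)(\rho_{AB})||\sigma\big)=S(\rho_{AB}||\sigma)-H(M_{m+1}|B)+S(A|B)$, which holds for any $\sigma$ block-diagonal in the $A$-basis $\{\ket{u^{m+1}_{i_{m+1}}}\}$ (the same ``repackaging'' as in Theorem~\ref{generalization_MU}), one obtains after $N-1$ steps
\begin{equation}
\sum_{m=1}^N H(M_m|B)-N\,S(A|B)\ \geq\ S(\rho_{AB}||\Omega),
\end{equation}
where $\Omega:=(\mathcal{M}_N\otimes I_B)\cdots(\mathcal{M}_1\otimes I_B)(\rho_{AB})=\sum_{i_N}\proj{u^N_{i_N}}\otimes\tilde{\sigma}_{i_N}$. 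The operators $\tilde{\sigma}_{i_N}$ on $B$ obey the recursion $\tilde{\sigma}^{(k)}_{i_k}=\sum_{i_{k-1}}c(u^{k-1}_{i_{k-1}},u^k_{i_k})\,\tilde{\sigma}^{(k-1)}_{i_{k-1}}$ with $\tilde{\sigma}^{(1)}_{i_1}=\bra{u^1_{i_1}}\rho_{AB}\ket{u^1_{i_1}}$, and are the operator-valued counterparts of the scalars $\beta^N_{i_N}$ appearing in Theorem~\ref{generalization_MU}.

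The last step is the genuinely new ingredient, and I expect it to be the main obstacle: converting the operator-valued $\Omega$ into the scalar $-\log(b)$. The key observation is the operator inequality $0\leq\bra{u^1_{i_1}}\rho_{AB}\ket{u^1_{i_1}}\leq\rho_B$ (the left-hand operators are positive and sum to $\rho_B$). Feeding this into the recursion --- extracting $\max_{i_1}c(u^1_{i_1},u^2_{i_2})$ at the first step and keeping the remaining factors $c(u^m_{i_m},u^{m+1}_{i_{m+1}})$ afterwards --- yields $\tilde{\sigma}_{i_N}\leq b\,\rho_B$ for every $i_N$, hence $\Omega\leq b\,(I_A\otimes\rho_B)$ with $b$ exactly the quantity of Theorem~\ref{generalization_MU}. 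Operator monotonicity of the logarithm then gives $\log\Omega\leq(\log b)I_{AB}+I_A\otimes\log\rho_B$, so that, using $\Tr\big(\rho_{AB}(I_A\otimes\log\rho_B)\big)=-S(\rho_B)$,
\begin{equation}
S(\rho_{AB}||\Omega)=-S(\rho_{AB})-\Tr(\rho_{AB}\log\Omega)\ \geq\ -S(\rho_{AB})-\log(b)+S(\rho_B)=-\log(b)-S(A|B).
\end{equation}
Combining with the master inequality above gives $\sum_m H(M_m|B)-N\,S(A|B)\geq-\log(b)-S(A|B)$, i.e.\ (\ref{NBertabound}). The only points needing care are the standard support conditions: the pinchings can only enlarge supports, so $S(\rho_{AB}||\Omega)$ is finite, and the operator-monotonicity step is applied after a routine regularization on $\mathrm{supp}\,\rho_B$ when $\rho_B$ is rank-deficient.
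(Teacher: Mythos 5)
Your proposal is correct and follows essentially the same route as the paper's own proof in its appendix: the bipartite relative-entropy identity $S(\rho_{AB}||\sum_j[v_j]\rho_{AB}[v_j])=H(V|B)-S(A|B)$, iteration via monotonicity under the pinching channels on $A$, the operator-valued recursion for $\beta^N_j=\tilde{\sigma}_{i_N}$, and the final bound $\tilde{\sigma}_{i_N}\leq b\,\rho_B$ leading to $S(\rho_{AB}||b\,I\otimes\rho_B)=-\log(b)-S(A|B)$. Your explicit attention to support/regularization issues is a small addition the paper omits, but the argument is otherwise the same.
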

The proof is similar to that for Theorem \ref{generalization_MU}. All that we should do is to replace $\rho$ by $\rho_{AB}$
and go along the same process. On the other hand as shown in \cite{Berta}, we would like to remark that
if taking the dimension of $B$ to be zero, the uncertainty relation in the presence of quantum memory
can be reduced to case without quantum memory shown in (\ref{bound_MU}).

Actually, this is not the only possible formalism of multi-measurement memory-assisted inequality.
For example consider (\ref{bound_MU}), if we interpret $\rho$ there to be a subsystem $\rho_A$ of a \emph{pure} bipartite state $\rho_{AB}$
then with $S(\rho_A)=S(\rho_B)$, we have straightforwardly the result,
\begin{eqnarray}
\sum_{m=1}^NH(M_m)&\geq &-\log(b)+(N-1)S(\rho_A)\nonumber \\
&=&-\log(b)+NS(\rho_B)+S(A|B).
\end{eqnarray}
One can easily recover the following corollary after subtracting $NS(\rho_B)$ on both sides.
\begin{corollary}
For a bipartite pure state $\rho_{AB}$ and $N$ projective measurements $\{M_i\}$ applied on $A$, we have
\begin{equation}
\sum_{m=1}^NH(M_m|B)\geq-\log(b)+S(A|B).
\label{pure_NBertabound}
\end{equation}
\end{corollary}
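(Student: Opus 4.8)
The plan is to obtain this corollary as an immediate specialization of Theorem~\ref{generalization_MU}, the key move being to interpret the mixed state $\rho$ appearing there as the reduced state $\rho_A$ of a \emph{purification} $\rho_{AB}$, and then to trade the von Neumann entropies for conditional entropies using the pure-state identities already recorded in the discussion of \eqref{Bertabound}. Concretely, I would first apply Theorem~\ref{generalization_MU} with the measured state taken to be $\rho_A=\Tr_B\rho_{AB}$, which gives $\sum_{m=1}^N H(M_m)\geq -\log(b)+(N-1)S(\rho_A)$, where $b$ is exactly the quantity from Theorem~\ref{generalization_MU} since it depends only on the overlaps $c(u^m_{i_m},u^{m+1}_{i_{m+1}})$ of the measurement bases on $A$ and not on the state.

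Next I would invoke two facts about pure $\rho_{AB}$. From the Schmidt decomposition, $S(\rho_A)=S(\rho_B)$, hence $S(A|B)=S(\rho_{AB})-S(\rho_B)=-S(\rho_B)$, so that $(N-1)S(\rho_A)=(N-1)S(\rho_B)=NS(\rho_B)+S(A|B)$. And for each projective measurement $M_m$ on $A$ one has $H(M_m|B)=H(M_m)-S(\rho_B)$; this is the only step that needs a line of justification, and it follows because the post-measurement state $\rho'_{AB}=\sum_i(\proj{u^m_i}\otimes I)\rho_{AB}(\proj{u^m_i}\otimes I)$ leaves the marginal on $B$ unchanged ($\rho'_B=\rho_B$) while being block diagonal with blocks $\bra{u^m_i}\rho_{AB}\ket{u^m_i}$ that are unnormalized pure states when $\rho_{AB}$ is pure, so $S(\rho'_{AB})=H(M_m)$ and therefore $H(M_m|B)=S(\rho'_{AB})-S(\rho'_B)=H(M_m)-S(\rho_B)$.

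Substituting, the inequality becomes $\sum_{m=1}^N H(M_m)\geq -\log(b)+NS(\rho_B)+S(A|B)$; subtracting $NS(\rho_B)=\sum_{m=1}^N S(\rho_B)$ from both sides and applying the identity $H(M_m|B)=H(M_m)-S(\rho_B)$ termwise yields $\sum_{m=1}^N H(M_m|B)\geq -\log(b)+S(A|B)$, which is the claim. I do not expect any real obstacle here: the argument is pure bookkeeping once Theorem~\ref{generalization_MU} is in hand, and the only substantive ingredient, the pure-state reduction $H(M_m|B)=H(M_m)-S(\rho_B)$, is standard. The one point to be careful about is consistency of $b$ — since $b$ is a function of the measurement bases on $A$ alone, it is unaffected by the choice of purification and can be taken verbatim from Theorem~\ref{generalization_MU}.
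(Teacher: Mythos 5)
Your proposal is correct and follows exactly the paper's own route: apply Theorem~\ref{generalization_MU} to $\rho_A$, use $S(\rho_A)=S(\rho_B)$ and $S(A|B)=-S(\rho_B)$ for a pure $\rho_{AB}$ to rewrite $(N-1)S(\rho_A)$ as $NS(\rho_B)+S(A|B)$, and subtract $NS(\rho_B)$ termwise via $H(M_m|B)=H(M_m)-S(\rho_B)$. Your justification of that last identity matches the argument the paper gives in its appendix on conditional entropy, so there is nothing to add.
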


We may also have a SCB for multiple measurements by repeatedly using the inequality
with two measurements (\ref{Bertabound}).
Similarly, all those bounds are complementary to each other. When $S(A|B)$ is negative, the SCB from
(\ref{Bertabound}) or the special case (\ref{pure_NBertabound}) for pure state can be tighter, when $S(A|B)$ is positive, the bound in (\ref{NBertabound}) is tighter.

\emph{Discussions.}---Entropic uncertainty relations for multiple measurements
are fundamental in quantum physics and can be applied for general
quantum key distribution protocols. We present the general uncertainty relations
for three different but related cases, the Deutsch type inequalities, the
MU type inequalities and the case in the presence of quantum memory.
Non-trivial and easy to compute bounds are presented which can provide a more precise description
of the uncertainty principle for quantum mechanics. The experimental
realization \cite{experiment1,experiment2} can also be implemented with more than two measurements settings.

\emph{Acknowledgement. }---We thank useful discussions with A. Winter, X. J. Ren and Y. C. Chang.
We thank K. Zyczkowski for correspondence.
This work was supported by NSFC (11175248), NFFTBS (J1030310, J1103205) and
grants from the Chinese Academy of Sciences.

\begin{widetext}

\appendix

\section{Generalization of Deutsch's inequality}\label{proofD}
We will here prove Theorem \ref{generalization_D} in the main article, which is not quite difficult. By the definition of R\'{e}nyi entropy, $H_{\infty}(\{p_i\})=-\log(\max_ip_i)$. Then the l.h.s. of the inequality is
\begin{eqnarray}
&&\sum_m H_{\infty}(M_m)=-\log(\max_{i_1}p^1_{i_1}\max_{i_2}p^2_{i_2}\ldots\max_{i_N}p^N_{i_N})\\
&&=\log[\max_{i_1,\ldots,i_N}(p^1_{i_1}p^2_{i_2}\ldots p^N_{i_N})]
\end{eqnarray}
Here we have used the similar notation convention as in the main article that superscripts represent the labels of measurements and subscripts represent the corresponding outcomes. To prove Theorem \ref{generalization_D}, it suffices to provide an upper bound on $\max_{i_1,\ldots,i_N}(p^1_{i_1}\ldots p^N_{i_N})$. We first do it for two measurements and show it could be easily generalized.

Recall the notation of measurements and bases in the main article. We tend to bound the term $\max_{i,j}|\langle u^1_i|\Phi\rangle\langle\Phi|u^2_j\rangle|$ with a larger and state-independent value, where we denote by $\ket{\Phi}$ the state of the measured system. Note that we are finally interested in the absolute value, thus we can convert our discussion from a complex vector space to a real one. More precisely, for a certain orthonormal basis $\{|k\rangle\}$ and any vector $|\eta\rangle=\sum_{k=1}^{d}\alpha_k|k\rangle$, define $|\tilde{\eta}\rangle=\sum_{k}|\alpha_k||k\rangle$. This map $|\eta\rangle\mapsto|\tilde{\eta}\rangle$ reduces this space to a subset of a real Euclidean space. Then, replacing every vector by its real image, all the upper bounds we get will also hold for original vectors, because obviously we have $|\langle a|b\rangle|=|a_1^*b_1+a_2^*b_2+...+a_n^*b_n|\leq |a_1||b_1|+|a_2||b_2|+...+|a_n||b_n|=\inner{\tilde{a}}{\tilde{b}}$. Here, the angle between two reduces vectors, which is always well defined in an Euclidean space, ranges from 0 to $\pi/2$.

Therefore we have
\begin{eqnarray}
&&|\langle u^1_i|\Phi\rangle\langle\Phi|u^2_j\rangle|=|\inner{u^1_i}{\Phi^{\|}}\inner{\Phi^{\|}}{u^2_j}|\\
&&\leq |\inner{\tilde{u^1_i}}{\tilde{\Phi^{\|}}}\inner{\tilde{\Phi^{\|}}}{\tilde{u^2_j}}|\\
&&=|\cos(\theta_1)\cos(\theta_2)|=\frac{1}{2}|\cos(\theta_1+\theta_2)+\cos(\theta_1-\theta_2)|\\
&&\leq\frac{1}{2}|\cos(\theta_1+\theta_2)+1|\\
&&=\frac{1}{2}(1+|\langle\tilde{u^1_i}|\tilde{u^2_j}\rangle|),
\end{eqnarray}
where $\ket{\Phi^{\|}}$ is the projection of $\ket{\Phi}$ onto the plane expanded by $\ket{u^1_i}$ and $\ket{u^2_j}$, $\theta_1$ is the angle between $\ket{\tilde{u^1_i}}$ and $\ket{\tilde{\Phi^{\|}}}$, and $\theta_2$ is that between $\ket{\tilde{u^2_j}}$ and $\ket{\tilde{\Phi^{\|}}}$. Then notice that, we could always choose a certain basis $\{\ket{k}\}$ with which we define the reduction such that $\ket{\tilde{u^1_i}}=\ket{u^1_i}$ and $\ket{\tilde{u^2_j}}={\rm e}^{{\rm i} \alpha}\ket{u^2_j}$. Therefore we have
\begin{equation}
|\langle u^1_i|\Phi\rangle\langle\Phi|u^2_j\rangle|\leq \frac{1}{2}(1+|\langle\tilde{u^1_i}|\tilde{u^2_j}\rangle|)=\frac{1}{2}(1+|\langle u^1_i|u^2_j\rangle|).
\end{equation}
If taking maximum here over $i$ and $j$, one will recover the result of Deustch \cite{Deutsch}, but we can go further.

We have straightforwardly that
\begin{eqnarray}
&&=|\langle u_{i_1}^1|\Phi\rangle\langle u_{i_2}^2|\Phi\rangle...\langle u_{i_N}^N|\Phi\rangle|\\
&&=\sqrt{|\langle u_{i_1}^1|\Phi\rangle\langle u_{i_2}^2|\Phi\rangle|}
\sqrt{|\langle u_{i_2}^2|\Phi\rangle\langle u_{i_3}^3|\Phi\rangle|}...
\sqrt{|\langle u_{i_{N-1}}^{N-1}|\Phi\rangle\langle u_{i_N}^N|\Phi\rangle|}
\sqrt{|\langle u_{i_N}^N|\Phi\rangle\langle u_{i_1}^1|\Phi\rangle|}\\
&&\leq \sqrt{(\frac{1+\cos\theta_{1,2}}{2})}\sqrt{(\frac{1+\cos\theta_{2,3}}{2})}...\sqrt{(\frac{1+\cos\theta_{N,N-1}}{2})}
\sqrt{(\frac{1+\cos\theta_{N,1}}{2})}\\
&&=\sqrt{\prod_{m=1}^N(\frac{1+\theta_{m,m+1}}{2})}=\sqrt{\prod_{m=1}^{N}(\frac{1+|\langle u_{i_m}^m|u_{i_{m+1}}^{m+1}\rangle|}{2})}.
\end{eqnarray}
Again, $N+1$ in the superscripts or subscripts is equivalent to 1. We square the inequality above, take maximum and logarithm, then it is exactly the desired result.

\section{Generalization of the Inequality of Maassen and Uffink}\label{proofMU}
We will here prove Theorem \ref{generalization_MU} in the main article as well as its corollaries. Considering the complexity on notation, we will first provide the proof for a simplified condition where there are only three measurements, and then generalize it by induction.
\subsection{Entropic uncertainty relation for three measurements}\label{3measurements}
For simplicity, we use the abbreviation $[\psi]$ to denote the projector $\ket{\psi}\bra{\psi}$. In the proof of our result, we utilize the concept of quantum relative entropy: by definition, $S(\rho||\sigma):=\Tr(\rho\log\rho)-\Tr(\rho\log\sigma)$. The method of our proof is inspired by Ref. \cite{Coles}.\\
\indent Consider three projective measurements, namely $U=\{\ket{u_i}\}$, $V=\{\ket{v_j}\}$, $W=\{\ket{w_k}\}$.
We have the following inequality.
\begin{theorem}
Denote by H($\cdot$) the Shannon entropy of a set of probabilities of a measurement, we have an entropic uncertainty relation:
\begin{equation}
H(U)+H(V)+H(W)\geq -\log(\max_i\sum_k(\max_j c(v_j,w_k))c(w_k,u_i))+2S(\rho),
\end{equation}
where $c(a,b):=|\langle a_i|b_j\rangle|^2$, and $S(\rho)$ is the von Neumann entropy of the state being measured.
\end{theorem}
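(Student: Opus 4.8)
The plan is to follow the relative-entropy argument sketched after Theorem~\ref{generalization_MU}, specialised to the case of three measurements, with the chain ordered so that the index surviving to the end is the one of $U$. For a projective measurement $X=\{\proj{x}\}$ write $\mathcal{P}_X(\cdot):=\sum_x\proj{x}(\cdot)\proj{x}$ for its pinching channel. Since $\mathcal{P}_X(\rho)=\sum_x p_x\proj{x}$ is diagonal in the eigenbasis of $X$, one has $\Tr(\rho\log\mathcal{P}_X(\rho))=\sum_x p_x\log p_x=-H(X)$, which is exactly the identity $H(X)-S(\rho)=S(\rho||\mathcal{P}_X(\rho))$ recalled in the main text; this is the starting point.

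First I would seed the chain with that identity for $X=V$, writing $\mathcal{P}_V(\rho)=\sum_j q_j\proj{v_j}$ with $q_j=\bra{v_j}\rho\ket{v_j}$. Applying monotonicity of the relative entropy under the pinching channel $\mathcal{P}_W$ gives $S(\rho||\mathcal{P}_V(\rho))\geq S(\mathcal{P}_W(\rho)||\tau)$, where $\mathcal{P}_W(\rho)=\sum_k r_k\proj{w_k}$ ($r_k=\bra{w_k}\rho\ket{w_k}$) and $\tau:=\mathcal{P}_W\mathcal{P}_V(\rho)=\sum_k\mu_k\proj{w_k}$ with $\mu_k=\sum_j q_j\,c(v_j,w_k)$. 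Because $\log\tau$ is diagonal in the $W$-basis, $\Tr(\mathcal{P}_W(\rho)\log\tau)=\Tr(\rho\log\tau)$, and a short computation yields the ``regeneration'' identity
\[
S(\mathcal{P}_W(\rho)||\tau)=-H(W)+S(\rho)+S(\rho||\tau).
\]
Combining the two displays gives $H(V)+H(W)-2S(\rho)\geq S(\rho||\tau)$. This is the crucial mechanism: each pinching strips off one Shannon entropy while leaving behind a relative entropy of the same shape, so the step can be iterated.

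Next I would apply monotonicity a second time, now under $\mathcal{P}_U$: $S(\rho||\tau)\geq S(\mathcal{P}_U(\rho)||\mathcal{P}_U(\tau))$, with $\mathcal{P}_U(\rho)=\sum_i p_i\proj{u_i}$ ($p_i=\bra{u_i}\rho\ket{u_i}$) and $\mathcal{P}_U(\tau)=\sum_i\nu_i\proj{u_i}$, $\nu_i=\sum_k\mu_k\,c(w_k,u_i)=\sum_{j,k}q_j\,c(v_j,w_k)\,c(w_k,u_i)$. Then $S(\mathcal{P}_U(\rho)||\mathcal{P}_U(\tau))=-H(U)-\sum_i p_i\log\nu_i\geq -H(U)-\log\max_i\nu_i$, the last step because a probability-weighted average of the numbers $-\log\nu_i$ is at least their minimum. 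Chaining everything, $H(U)+H(V)+H(W)\geq 2S(\rho)-\log\max_i\nu_i$. Finally, since $\{q_j\}$ is a probability distribution, $\sum_j q_j\,c(v_j,w_k)\leq\max_j c(v_j,w_k)$ for every $k$, hence $\max_i\nu_i\leq\max_i\sum_k(\max_j c(v_j,w_k))\,c(w_k,u_i)$; substituting this into the previous inequality gives precisely the claimed bound.

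The one genuinely delicate step is the regeneration identity: one must use that both $\mathcal{P}_W(\rho)$ and $\tau$ are $W$-diagonal to replace $\Tr(\mathcal{P}_W(\rho)\log\tau)$ by $\Tr(\rho\log\tau)$, and then keep careful track of signs so that the extra $+S(\rho)$ and the fresh $S(\rho||\tau)$ appear with the correct coefficients — this bookkeeping is exactly what lets the argument iterate to general $N$ and, with $\rho$ replaced by $\rho_{AB}$, carry over verbatim to the memory-assisted case. A secondary point is that ordering the chain as $V\to W\to U$ (rather than some other assignment of the three measurements to the three slots) is what puts $\max_i$ outside the sum over $k$ and the single $\max_j$ inside it, reproducing the specific asymmetric quantity $b$ in the statement; a different assignment would give the analogous, generally incomparable, bound. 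The same scheme, with the final relative-entropy term weakened slightly differently, is what produces Corollary~\ref{weighted}.
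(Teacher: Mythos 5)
Your proposal is correct and follows essentially the same route as the paper's own proof: the identity $H(X)-S(\rho)=S(\rho\|\mathcal{P}_X(\rho))$, two applications of monotonicity of relative entropy under the pinching channels $\mathcal{P}_W$ and then $\mathcal{P}_U$, the same regeneration bookkeeping, and the same final weakening $\sum_j q_j c(v_j,w_k)\leq\max_j c(v_j,w_k)$. The only cosmetic difference is that you evaluate the last relative entropy explicitly and bound $-\sum_i p_i\log\nu_i\geq-\log\max_i\nu_i$, whereas the paper first replaces $\nu_i$ by $\max_i\nu_i$ inside the second argument; these are trivially equivalent.
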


\begin{proof}
First, we notice the following relation:
\begin{eqnarray}
&&S(\rho||\sum_j[v_j]\rho[v_j])\\
&&=\Tr(\rho\log\rho)-\Tr(\rho\log(\sum_j[v_j]\rho[v_j]))\\
&&=-S(\rho)-\Tr(\rho\log(\sum_j\ket{v_j}p_j\bra{v_j}))~~~(p_j:=\bra{v_j}\rho\ket{v_j})\\
&&=-S(\rho)-\Tr(\rho(\sum_j\ket{v_j}\log p_j\bra{v_j}))\\
&&=-S(\rho)-\sum_j(\Tr(\rho\ket{v_j}\log p_j\bra{v_j}))\\
&&=-S(\rho)-\sum_j p_j\log p_j\\
&&=-S(\rho)+H(V).
\end{eqnarray}
We then have
\begin{eqnarray}
&&-S(\rho)+H(V)=S(\rho||\sum_j[v_j]\rho[v_j])\\
&&\geq S(\sum_k[w_k]\rho[w_k]||\sum_{j,k}\ket{w_k}p_jc(w_k,v_j)\bra{w_k})~~~\text{(explained later)}\\
&&=\Tr(\sum_k[w_k]\rho[w_k]\log(\sum_{k}[w_k]\rho[w_k]))\\
&&-\Tr(\sum_k[w_k]\rho[w_k]\log(\sum_{j,k}\ket{w_k}p_jc(w_k,v_j)\bra{w_k}))\\
&&=-H(W)-\Tr(\sum_k[w_k]\rho[w_k]\log(\sum_k\ket{w_k}\alpha_k\bra{w_k}))\\
&&=-H(W)-\Tr(\rho\log(\sum_k\ket{w_k}\alpha_k\bra{w_k}))\\
&&=-H(W)+S(\rho||\sum_{j,k}\ket{w_k}p_jc(w_k,v_j)\bra{w_k})+S(\rho).
\end{eqnarray}
The second line invoked $S(\rho||\sigma)\geq S(\mathcal{E}(\rho)||\mathcal{E}(\sigma))$ (see Page 208 of Ref. \cite{Vedral}) with $\mathcal{E}(\rho)=\sum_k[w_k]\rho[w_k]$. \\
Thus we first get
\begin{equation}
H(V)+H(W)\geq S(\rho||\sum_{j,k}\ket{w_k}p_jc(w_k,v_j)\bra{w_k})+2S(\rho).
\label{2_ob_dep}
\end{equation}
Then, we have
\begin{eqnarray}
&&-2S(\rho)+H(V)+H(W)\geq S(\rho||\sum_{j,k}\ket{w_k}p_jc(w_k,v_j)\bra{w_k})\\
&&\geq S(\sum_i[u_i]\rho[u_i]||\sum_{i,j,k}\ket{u_i}p_jc(w_k,v_j)c(w_k,u_i)\bra{u_i})~~~\text{(again)}\\
&&=S(\sum_i[u_i]\rho[u_i]||\sum_{i}\ket{u_i}\beta_i\bra{u_i})~~~(\beta_i:=\sum_{j,k}p_jc(w_k,v_j)c(w_k,u_i))\\
&&\geq S(\sum_i[u_i]\rho[u_i]||\sum_{i}\ket{u_i}\max_i(\beta_i)\bra{u_i})\\
&&=S(\sum_i[u_i]\rho[u_i]||hI)~~~(h:=\max_i \beta_i)\\
&&=-H(U)-\Tr(\sum_i[u_i]\rho[u_i]\log(hI))\\
&&=-H(U)-\log(h)\cdot \Tr(I\sum_i[u_i]\rho[u_i])\\
&&=-H(U)-\log(h).
\end{eqnarray}
We get
\begin{equation}
H(U)+H(V)+H(W)\geq -\log(h)+2S(\rho).
\end{equation}
However, the term $h$ ($h=\max_i\sum_{j,k}p_jc(w_k,v_j)c(w_k,u_i)$) is still state-dependent. To obtain a state-independent bound, we should take maximum over $j$ \emph{inside} the summation. More precisely,
\begin{eqnarray}
&&h=\max_i\sum_{j,k}p_jc(w_k,v_j)c(w_k,u_i)\\
&&\leq \max_i\sum_{j,k}p_j(\max_j c(v_j,w_k))c(w_k,u_i)\\
&&=\max_i\sum_k(\max_j c(v_j,w_k))c(w_k,u_i).
\end{eqnarray}
Finally,
\begin{equation}
H(U)+H(V)+H(W)\geq -\log(b)+2S(\rho),
\end{equation}
where $b=\max_i\sum_k(\max_j c(v_j,w_k))c(w_k,u_i)$.\\
\end{proof}
\indent We note that since the sum over $i$ in the expression of $b$ is \emph{outside} the summation over $k$, $b$ is always lower than or equal to 1, thus our bound is non-trivial.

Actually, from the proof of this theorem, we can further get a corollary that is a weighted uncertainty relation. Recall \eqref{2_ob_dep}:
\begin{eqnarray}
&&H(V)+H(W)\geq S(\rho||\sum_{j,k}\ket{w_k}p_jc(w_k,v_j)\bra{w_k})+2S(\rho)\\
&&=S(\rho)-\Tr(\rho\log(\sum_k\ket{w_k}[\sum_jp_jc(w_k,v_j)]\bra{w_k})),
\end{eqnarray}
so obviously we have:
\begin{eqnarray}
&&H(W)+H(V)+H(W)+H(U)\\
&&\geq S(\rho)-\Tr(\rho\log(\sum_k\ket{w_k}[\sum_jp_jc(w_k,v_j)]\bra{w_k}))\\
&&+S(\rho)-\Tr(\rho\log(\sum_k\ket{w_k}[\sum_iq_ic(w_k,u_i)]\bra{w_k}))~~~(q_i:=\bra{u_i}\rho\ket{u_i})\\
&&=2S(\rho)-\Tr(\rho\log(\sum_k\ket{w_k}[\sum_{i,j}p_jq_ic(w_k,u_i)c(w_k,v_j)]\bra{w_k}))\\
&&\geq 2S(\rho)-\Tr(\rho\log(\max_{i,j,k}[c(u_i,w_k)c(w_k,v_j)]I))\\
&&=2S(\rho)-\log(\max_{i,j,k}(c(u_i,w_k)c(w_k,v_j))).
\end{eqnarray}
This is actually Corollary \ref{weighted} in the main article.
\subsection{Arbitrary number of measurements}
We boil down the proof into two steps.
\begin{lemma}\label{lemma1}
Given $N$ measurements $M_1,~M_2,~...M_N$, we have
\begin{equation}
-NS(\rho)+\sum_{m=1}^NH(M_m)\geq S(\rho||\sum_j\ket{u^N_j}\beta^N_j\bra{u^N_j}),
\end{equation}
where $\beta^N_j:=\sum_{i_1,i_2,...i_{N-1}} c(\rho,u^1_{i_1})c(u^1_{i_1},u^2_{i_2})...c(u^{N-1}_{i_{N-1}},u^N_j)$. We have used the consistent notation $c(\rho,u^m_i):=\bra{u^m_i}\rho\ket{u^m_i}=p^m_i$.
\end{lemma}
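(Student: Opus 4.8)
The plan is to prove Lemma~\ref{lemma1} by induction on $N$, since the passage from $N-1$ to $N$ is nothing but one more application of the two moves already used in the three-measurement proof: the identity expressing $H(M)-S(\rho)$ as a relative entropy, and monotonicity of relative entropy under a pinching channel. Introduce the shorthand $\sigma_m:=\sum_{i_m}\ket{u^m_{i_m}}\beta^m_{i_m}\bra{u^m_{i_m}}$ for the diagonal operator that appears on the right-hand side at stage $m$, and let $\mathcal{E}_m(X):=\sum_{i_m}[u^m_{i_m}]X[u^m_{i_m}]$ be the projective-measurement (pinching) channel of $M_m$, which is trace preserving.

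The base case is $N=1$: here $\beta^1_{i_1}=c(\rho,u^1_{i_1})=\bra{u^1_{i_1}}\rho\ket{u^1_{i_1}}$, so $\sigma_1=\sum_{i_1}[u^1_{i_1}]\rho[u^1_{i_1}]$ and the claimed inequality is exactly the identity $H(M_1)-S(\rho)=S(\rho||\sum_{i_1}[u^1_{i_1}]\rho[u^1_{i_1}])$ derived at the beginning of the three-measurement proof (it in fact holds with equality). The case $N=2$ is also at hand: it is \eqref{2_ob_dep} after relabeling $(V,W)\to(M_1,M_2)$ and using the symmetry $c(a,b)=c(b,a)$.

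For the inductive step, assume $-(N-1)S(\rho)+\sum_{m=1}^{N-1}H(M_m)\geq S(\rho||\sigma_{N-1})$. Monotonicity of the quantum relative entropy under $\mathcal{E}_N$ (the theorem of Ref.~\cite{Vedral} used above) gives $S(\rho||\sigma_{N-1})\geq S(\mathcal{E}_N(\rho)||\mathcal{E}_N(\sigma_{N-1}))$. Two elementary facts then close the argument. First, $\mathcal{E}_N(\sigma_{N-1})=\sigma_N$: pinching $\sigma_{N-1}$ onto $\{\ket{u^N_j}\}$ produces the diagonal operator with weights $\sum_{i_{N-1}}\beta^{N-1}_{i_{N-1}}\,c(u^{N-1}_{i_{N-1}},u^N_j)$, and stripping the last factor from the multi-index sum defining $\beta^N_j$ shows these weights are precisely $\beta^N_j$, i.e. $\beta^N_j=\sum_i\beta^{N-1}_i\,c(u^{N-1}_i,u^N_j)$. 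Second, since $\mathcal{E}_N(\rho)=\sum_j\ket{u^N_j}p^N_j\bra{u^N_j}$ with $p^N_j=\bra{u^N_j}\rho\ket{u^N_j}$, one has $\Tr(\mathcal{E}_N(\rho)\log\mathcal{E}_N(\rho))=-H(M_N)$, while $\Tr(\mathcal{E}_N(\rho)\log\sigma_N)=\Tr(\rho\log\sigma_N)$ because $\log\sigma_N$ is diagonal in the $N$th eigenbasis; combined with $S(\rho||\sigma_N)=-S(\rho)-\Tr(\rho\log\sigma_N)$ this yields $S(\mathcal{E}_N(\rho)||\sigma_N)=-H(M_N)+S(\rho)+S(\rho||\sigma_N)$. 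Chaining the induction hypothesis, the data-processing inequality and these identities, and then carrying $H(M_N)$ and $S(\rho)$ to the left, gives exactly $-NS(\rho)+\sum_{m=1}^{N}H(M_m)\geq S(\rho||\sigma_N)$, which is the statement for $N$.

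The main obstacle I expect is notational rather than conceptual: making the recursion $\beta^N_j=\sum_i\beta^{N-1}_i\,c(u^{N-1}_i,u^N_j)$ --- equivalently $\mathcal{E}_N(\sigma_{N-1})=\sigma_N$ --- fully transparent calls for carefully unwinding the nested sum in the definition of $\beta^N_j$ and tracking the $c(\cdot,\cdot)=|\inner{\cdot}{\cdot}|^2$ factors; once that is set down, every remaining line simply mirrors the three-measurement computation. A minor point, which I would either leave implicit (as that proof does) or dispatch with a short full-support/continuity remark, is that all relative entropies here remain finite, i.e. $\mathrm{supp}\,\mathcal{E}_N(\rho)\subseteq\mathrm{supp}\,\sigma_N$, which holds for generic $\rho$.
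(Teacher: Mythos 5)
Your proposal is correct and follows essentially the same route as the paper's own proof: induction on the number of measurements, with the inductive step carried out by applying monotonicity of relative entropy under the pinching channel $\mathcal{E}_N$, verifying the recursion $\beta^N_j=\sum_i\beta^{N-1}_i c(u^{N-1}_i,u^N_j)$, and converting $S(\mathcal{E}_N(\rho)\,||\,\sigma_N)$ back into $-H(M_N)+S(\rho)+S(\rho\,||\,\sigma_N)$. The only (harmless) difference is that you anchor the induction at $N=1$ rather than at the paper's $N=2$ base case.
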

\begin{proof}
We have got this relation for $N=2$ (see equation \eqref{2_ob_dep}), so we prove it by induction. Suppose that this relation is satisfied for $N$ measurements, we proceed to prove it for $N+1$. We have
\begin{eqnarray}
&&-NS(\rho)+\sum_{m=1}^NH(M_m)\\
&&\geq S(\sum_k[u^{N+1}_k]\rho[u^{N+1}_k]||\sum_{j,k}\ket{u^{N+1}_k}\beta^N_jc(u^N_j,u^{N+1}_k)\bra{u^{N+1}_k})\\
&&=-H(M_{N+1})-\Tr[\sum_k[u^{N+1}_k]\rho[u^{N+1}_k]\log(\sum_k\ket{u^{N+1}_k}\eta_k\bra{u^{N+1}_k})]\\
&&=-H(M_{N+1})-\Tr[\sum_k[u^{N+1}_k]\rho[u^{N+1}_k]\sum_l\ket{u^{N+1}_l}\log\eta_l\bra{u^{N+1}_l}]\\
&&=-H(M_{N+1})-\sum_k\Tr([u^{N+1}_k]\rho[u^{N+1}_k]\sum_l\ket{u^{N+1}_l}\log\eta_l\bra{u^{N+1}_l})\\
&&=-H(M_{N+1})-\sum_k\Tr(\rho\sum_l\ket{u^{N+1}_k}\delta_{kl}\log{\eta_l}\delta_{kl}\bra{u^{N+1}_k})\\
&&=-H(M_{N+1})-\sum_k\Tr(\rho\ket{u^{N+1}_k}\log{\eta_k}\bra{u^{N+1}_k})\\
&&=-H(M_{N+1})-\Tr(\rho\log(\sum_k\ket{u^{N+1}_k}\eta_k\bra{u^{N+1}_k}))+\Tr(\rho\log\rho)-\Tr(\rho\log\rho)\\
&&=-H(M_{N+1})+S(\rho||\sum_k\ket{u^{N+1}_k}\eta_k\bra{u^{N+1}_k})+S(\rho)
\end{eqnarray}
Notice that $\eta_k=\sum_j\beta^N_jc(u^N_j,u^{N+1}_k)=\beta^{N+1}_k$ (second and third lines). We therefore finish the proof.
\end{proof}
Just like the previous proof, if we impose a state-independent as well as $j$-independent upper bound to $\beta^N_j$, we can get a state-independent uncertainty relation.

Then we simply have
\begin{eqnarray}
&&-NS(\rho)+\sum_{m=1}^NH(M_m)\geq S(\rho||\sum_j\ket{u^N_j}\beta^N_j\bra{u^N_j})\\
&&=-S(\rho)-\Tr(\rho\log(\sum_j\ket{u^N_j}\beta^N_j\bra{u^N_j}))\\
&&\geq -S(\rho)-\Tr(\rho\log(bI))~~~\text{(since $b$ is greater than $\beta^N_j$)}\\
&&=-S(\rho)-\log(b).
\end{eqnarray}

This is nothing but Theorem \ref{generalization_MU} in the main article.
\section{Uncertainty relation with quantum memory}\label{conditionalentropy}
To start, we justify some assertions about quantum conditional entropy mentioned in the main article.

By the definition of Berta \emph{et al.} \cite{Berta}, $H(U|B)$ is the conditional von Neumann entropy of the post-measurement state $\sum_i( \ket{u_i}\bra{u_i}\otimes I)\rho_{AB}(\ket{u_i}\bra{u_i}\otimes I)$, where $U$ is a projective measurement performed on system $A$. An alternative definition is used in the work of Coles \emph{et al.} \cite{Coles}:
\begin{equation}
H(U|B):=H(U)-\chi(U,B),
\end{equation}
where $\chi(U,B)$ is the Holevo quantity $S(\rho_B)-\sum_jp_jS(\rho_{B,j})$. $\rho_{B,j}$ is the state of $B$ when measurement $U$ gets the $j$th outcome (with probability $p_j$), i.e. $\rho_{B,j}=\Tr_A(\ket{u_j}\bra{u_j}\rho_{AB})/p_j$.

To prove the equivalence of the two definitions, we utilize Lemma 1 in Ref. \cite{Colespra} (see also Page 513 of Ref. \cite{Nielsen&Chuang}), which says that
\begin{equation}
S(\sum_jp_j\rho_j)-\sum_jp_jS(\rho_j)\leq H(\{p_j\}),
\end{equation}
with equality if and only if $\rho_j$ are mutually orthogonal.

We then have
\begin{eqnarray}
&&S(\sum_i( \ket{u_i}\bra{u_i}\otimes I)\rho_{AB}(\ket{u_i}\bra{u_i}\otimes I))-S(\rho_B)\\
&&=S(\sum_jp_j\ket{u_j}\bra{u_j}\otimes\rho_{B,j})-S(\rho_B)\\
&&=H(U)+\sum_jp_jS(\rho_{B,j})-S(\rho_B)\\
&&=H(U)-\chi(U,B).
\end{eqnarray}
When $\rho_{AB}$ is pure, we can always expand the state as $\ket{\psi_{AB}}=\sum_{i}\alpha_i\ket{u_i}\ket{\psi_{B,i}}$, so $\rho_{B,i}$ is also a pure state. Therefore, in this special case, $H(U|B)=H(U)-S(\rho_B)$, which is used in the main article.

Now, let's begin our process to generalize the result of Berta \emph{et al.} The idea is straightforward: we just replace $\rho$ by $\rho_{AB}$ in our proof in Appendix \ref{proofMU} and see what happens. There is no doubt that most of the calculation will be similar, so we just note some key points here.

First, recall that we made a connection between information entropy of a measurement with quantum relative entropy, i.e. $S(\rho||\sum_j[v_j]\rho[v_j])=-S(\rho)+H(V)$. As for bipartite systems $\rho_{AB}$, we have
\begin{eqnarray}
&&S(\rho_{AB}||\sum_j[v_j]\rho_{AB}[v_j])\\
&&=-S(\rho_{AB})-\Tr(\rho_{AB}\log(\sum_j[v_j]\rho_{AB}[v_j]))\\
&&=-S(\rho_{AB})+S(\sum_j[v_j]\rho_{AB}[v_j])\\
&&=-S(\rho_{AB})+S(\rho_B)+S(\sum_j[v_j]\rho_{AB}[v_j])-S(\rho_B)\\
&&=H(V|B)-S(A|B).
\end{eqnarray}
Recollect that $[v_j]$ is the measurement projector acting only on system $A$. This results are obviously similar.

Then, our standard method with which the three-measurement inequality is proven can be performed in the same manner.
\begin{eqnarray}
&&S(\rho_{AB}||\sum_j[v_j]\rho_{AB}[v_j])\\
&&\geq S(\sum_k[w_k]\rho_{AB}[w_k]||\sum_{j,k}\ket{w_k}c(w_k,v_j)\bra{v_j}\rho_{AB}\ket{v_j}\bra{w_k})\\
&&=-S(\sum_k[w_k]\rho_{AB}[w_k])-\Tr(\sum_k[w_k]\rho_{AB}[w_k]\log(\sum_{j,k}\ket{w_k}c(w_k,v_j)\bra{v_j}\rho_{AB}\ket{v_j}\bra{w_k}))\\
&&=-S(\sum_k[w_k]\rho_{AB}[w_k])-\Tr(\rho_{AB}\log(\sum_{j,k}\ket{w_k}c(w_k,v_j)\bra{v_j}\rho_{AB}\ket{v_j}\bra{w_k}))\\
&&=-S(\sum_k[w_k]\rho_{AB}[w_k])+S(\rho_{AB}||\sum_{j,k}\ket{w_k}c(w_k,v_j)\bra{v_j}\rho_{AB}\ket{v_j}\bra{w_k})+S(\rho_{AB})\\
&&=-H(W|B)+S(\rho_{AB}||\sum_{j,k}\ket{w_k}c(w_k,v_j)\bra{v_j}\rho_{AB}\ket{v_j}\bra{w_k})+S(A|B).
\end{eqnarray}
Here, the term $\bra{v_j}\rho_{AB}\ket{v_j}$ is simply $p_j\rho_{B,j}$. Hence, the result above is again much similar to our previous result---one need only replace $H(\cdot)$ by $H(\cdot|B)$, $S(\rho)$ by $S(A|B)$, and $p_j$ by $p_j\rho_{B,j}$. Consequently, a generalized lemma corresponding to Lemma \ref{lemma1} can be obtained easily.
\begin{lemma}
Given a bipartite state $\rho_{AB}$ and N projective measurements $M_1,~M_2,~...M_N$ acting on system $A$, we have
\begin{equation}
-NS(A|B)+\sum_{m=1}^NH(M_m|B)\geq S(\rho_{AB}||\sum_j\ket{u^N_j}\beta^N_j\bra{u^N_j}),
\end{equation}
where $\beta^N_j:=\sum_{i_1,i_2,...i_{N-1}} c(\rho_{AB},u^1_{i_1})c(u^1_{i_1},u^2_{i_2})...c(u^{N-1}_{i_{N-1}},u^N_j)$. We have used the consistent notation $c(\rho_{AB},u^m_i):=\bra{u^m_i}\rho_{AB}\ket{u^m_i}=p^m_i\rho_{B,m_i}$.
\end{lemma}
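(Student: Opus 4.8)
The plan is to establish the lemma by induction on $N$, following the proof of Lemma \ref{lemma1} essentially line for line and only adjusting the bookkeeping to the bipartite setting. The base case $N=2$ is already contained in the displayed chain of (in)equalities just above: combined with the identity $S(\rho_{AB}||\sum_j\proj{u^1_j}\rho_{AB}\proj{u^1_j})=H(M_1|B)-S(A|B)$ derived earlier in this appendix, it gives $H(M_1|B)+H(M_2|B)\geq S(\rho_{AB}||\sum_k\ket{u^2_k}\beta^2_k\bra{u^2_k})+2S(A|B)$ with $\beta^2_k=\sum_j c(\rho_{AB},u^1_j)\,c(u^1_j,u^2_k)$, which is exactly the claimed inequality for $N=2$. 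A point to keep in mind from the outset is that the ``coefficients'' $\beta^N_j$ are now positive operators acting on $B$ --- the first factor $c(\rho_{AB},u^1_{i_1})=\bra{u^1_{i_1}}\rho_{AB}\ket{u^1_{i_1}}=p^1_{i_1}\rho_{B,1_{i_1}}$ is an operator, the remaining factors are scalars. Since $\proj{u^N_j}$ acts on $A$ while $\beta^N_j$ acts on $B$ they commute, so $\Sigma_N:=\sum_j\ket{u^N_j}\beta^N_j\bra{u^N_j}$ is a well-defined positive operator on $AB$, block-diagonal in the $u^N$-basis of $A$, with $\log\Sigma_N=\sum_j\ket{u^N_j}(\log\beta^N_j)\bra{u^N_j}$ on its support.

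For the inductive step, assume the inequality for $N$ and apply monotonicity of relative entropy under the pinching channel $\mathcal{E}(X):=\sum_k(\proj{u^{N+1}_k}\otimes I)\,X\,(\proj{u^{N+1}_k}\otimes I)$: $S(\rho_{AB}||\Sigma_N)\geq S(\mathcal{E}(\rho_{AB})||\mathcal{E}(\Sigma_N))$. A one-line computation gives $\mathcal{E}(\Sigma_N)=\sum_k\ket{u^{N+1}_k}\bigl(\sum_j c(u^N_j,u^{N+1}_k)\beta^N_j\bigr)\bra{u^{N+1}_k}=\Sigma_{N+1}$, which is precisely where the recursion $\beta^{N+1}_k=\sum_j\beta^N_j\,c(u^N_j,u^{N+1}_k)$ is produced.

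It then remains to unpack $S(\mathcal{E}(\rho_{AB})||\Sigma_{N+1})$. Because $\mathcal{E}(\rho_{AB})$ and $\log\Sigma_{N+1}$ are both block-diagonal in the $u^{N+1}$-basis of $A$, one has $\Tr(\mathcal{E}(\rho_{AB})\log\Sigma_{N+1})=\Tr(\rho_{AB}\log\Sigma_{N+1})$, hence $S(\mathcal{E}(\rho_{AB})||\Sigma_{N+1})=-S(\mathcal{E}(\rho_{AB}))-\Tr(\rho_{AB}\log\Sigma_{N+1})$. I would then use the identity established at the beginning of this appendix, $S(\mathcal{E}(\rho_{AB}))=H(M_{N+1}|B)+S(\rho_B)$, together with $S(\rho_B)=S(\rho_{AB})-S(A|B)$, to rewrite $-S(\mathcal{E}(\rho_{AB}))=-H(M_{N+1}|B)+S(A|B)-S(\rho_{AB})$, and recombine $-S(\rho_{AB})-\Tr(\rho_{AB}\log\Sigma_{N+1})=\Tr(\rho_{AB}\log\rho_{AB})-\Tr(\rho_{AB}\log\Sigma_{N+1})=S(\rho_{AB}||\Sigma_{N+1})$. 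This yields $S(\mathcal{E}(\rho_{AB})||\Sigma_{N+1})=-H(M_{N+1}|B)+S(A|B)+S(\rho_{AB}||\Sigma_{N+1})$; substituting back into the inductive hypothesis and transposing the $H(M_{N+1}|B)$ and $S(A|B)$ terms gives $-(N+1)S(A|B)+\sum_{m=1}^{N+1}H(M_m|B)\geq S(\rho_{AB}||\Sigma_{N+1})$, the lemma for $N+1$.

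I expect the only genuine subtlety --- as opposed to routine transcription --- to be the handling of the operator-valued $\beta^N_j$, which is trivial in the scalar setting of Lemma \ref{lemma1}. Concretely one must verify that $\Sigma_N\geq 0$ with $\mathrm{supp}\,\rho_{AB}\subseteq\mathrm{supp}\,\Sigma_N$, so that every relative entropy appearing in the induction is finite and the algebraic manipulations are legitimate; this follows since pinching never shrinks support and a trace-preserving map preserves the containment $\mathrm{supp}\,\rho\subseteq\mathrm{supp}\,\sigma$, so the property propagates from the $N=2$ step along the induction. Everything else is an exact copy of the proof of Lemma \ref{lemma1} with the substitutions $S(\rho)\to S(A|B)$, $H(\cdot)\to H(\cdot|B)$, and $p^m_i\to p^m_i\rho_{B,m_i}$.
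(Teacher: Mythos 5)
Your proposal is correct and follows essentially the same route as the paper: the paper's own argument is precisely to rerun the induction of Lemma~\ref{lemma1} with $\rho\to\rho_{AB}$, $H(\cdot)\to H(\cdot|B)$, $S(\rho)\to S(A|B)$ and $p_j\to p_j\rho_{B,j}$, using the identity $S(\rho_{AB}||\sum_j[v_j]\rho_{AB}[v_j])=H(V|B)-S(A|B)$ and monotonicity under the pinching channel. Your treatment of the operator-valued $\beta^N_j$ and of the support condition is, if anything, more explicit than the paper's.
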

We further reduce the r.h.s. to a state-independent bound.
\begin{eqnarray}
&&S(\rho_{AB}||\sum_j\ket{u^N_j}\beta^N_j\bra{u^N_j})\\
&&=S(\rho_{AB}||\sum_j\ket{u^N_j}[\sum_{i_1,i_2,...i_{N-1}} c(\rho_{AB},u^1_{i_1})c(u^1_{i_1},u^2_{i_2})...c(u^{N-1}_{i_{N-1}},u^N_j)]\bra{u^N_j})\\
&&\geq S(\rho_{AB}||\sum_j\ket{u^N_j}[\sum_{i_1,i_2,...i_{N-1}}\bra{u^1_{i_1}}\rho_{AB}\ket{u^1_{i_1}}\max_{i_1}(c(u^1_{i_1},u^2_{i_2}))...c(u^{N-1}_{i_{N-1}},u^N_j)]\bra{u^N_j})\\
&&=S(\rho_{AB}||\sum_j\ket{u^N_j}[\sum_{i_2,...i_{N-1}}\rho_B\max_{i_1}(c(u^1_{i_1},u^2_{i_2}))...c(u^{N-1}_{i_{N-1}},u^N_j)]\bra{u^N_j})\\
&&\geq S(\rho_{AB}||\sum_j\ket{u^N_j}[b\rho_{B}]\bra{u^N_j})\\
&&=S(\rho_{AB}||bI\otimes\rho_B)\\
&&=-S(A|B)-\log(b).
\end{eqnarray}
A justification of the last step can be seen in equations 11.104-11.106 on Page 521 of Ref. \cite{Nielsen&Chuang}. Therefore we have proven that
\begin{equation}
\sum_{m=1}^NH(M_m|B)\geq -\log(b)+(N-1)S(A|B).
\end{equation}
This is exactly Theorem \ref{generalization_memory} in the main article.

\end{widetext}

\end{document}